\newtheorem{theorem}{Theorem}
\newtheorem{proposition}{Proposition}
\newtheorem{assumption}{Assumption}
\newtheorem{corollary}{Corollary}
\newtheorem{lemma}{Lemma}
\newtheorem{remark}{Remark}
\DeclareMathOperator{\col}{col}
\newcommand{\N}{{\mathbb{N}}}
\newcommand{\R}{{\mathbb{R}}}
\newcommand{\C}{{\mathbb{C}}}
\renewcommand{\QED}{\QEDopen}
\title{\LARGE \bf Data-Driven Stabilization of Continuous-Time LTI Systems\\
from Noisy Input–Output Data
}
\author{Alessandro Bosso$^{1}$, Marco Borghesi$^{1}$, Andrea Iannelli$^{2}$, Bowen Yi$^{3}$, and Giuseppe Notarstefano$^{1}$
\thanks{$^{1}$A. Bosso, M. Borghesi, and G. Notarstefano are with the Department of Electrical, Electronic, and Information Engineering, University of Bologna, Italy. Email: {\tt\small \{alessandro.bosso, m.borghesi, giuseppe.notarstefano\}@unibo.it}}
\thanks{$^{2}$A. Iannelli is with the Institute for System Theory and Automatic Control, University of Stuttgart, Germany. Email: {\tt\small andrea.iannelli@ist.uni-stuttgart.de}
}
\thanks{$^{3}$B. Yi is with the Department of Electrical Engineering, Polytechnique Montr\'eal and GERAD, Montr\'eal, Canada.
{\tt\small bowen.yi@polymtl.ca}}
\thanks{The research leading to these results has received funding from the European Union's Horizon Europe research and innovation program under the Marie Sk{\l}odowska-Curie Grant Agreement No. 101104404 - \mbox{IMPACT4Mech}.}
}
\begin{document}

    \maketitle
	\thispagestyle{empty}
	\pagestyle{empty}
	
	\begin{abstract}
        We present an approach to compute stabilizing controllers for continuous-time linear time-invariant systems directly from an input–output trajectory affected by process and measurement noise.
        The proposed output-feedback design combines (i) an observer of a non-minimal realization of the plant and (ii) a feedback law obtained from a linear matrix inequality (LMI) that depends solely on the available data.
        Under a suitable interval excitation condition and knowledge of a noise energy bound, the feasibility of the LMI is shown to be necessary and sufficient for stabilizing all non-minimal realizations consistent with the data.
        We further provide a condition for the feasibility of the LMI related to the signal-to-noise ratio, guidelines to compute the noise energy bound, and numerical simulations that illustrate the effectiveness of the approach.
	\end{abstract}
    
    \section{Introduction}

    Recent trends in automatic control highlight a growing reliance on data-based methods.
    Among them, the design of controllers directly from data---rooted in classical adaptive control \cite{ioannou2012robust}---has gained significant momentum.
    In particular, \emph{data-driven control} \cite{DataBasedCtrlBook} has become a central paradigm, inspired by the wave of results related to Willems et al.'s fundamental lemma \cite{willems2005note} and fueled by the increasing availability of large datasets.
    This approach aims to map the available data into controllers via linear matrix inequalities (LMIs) or related optimization programs, leading to an end-to-end methodology that avoids any modeling or identification step.

    In the discrete-time domain, two fundamental contributions to data-driven control are \cite{de2019formulas} and \cite{van2020data}, focusing on linear time-invariant (LTI) systems.
    These works proposed LMI formulations based respectively on data-based closed-loop system parameterizations and the data informativity framework.
    Later, data-driven LMIs particularly suited to noisy datasets have been developed from open-loop system parameterizations \cite{bisoffi2022data, van2023quadratic, DataBasedCtrlBook}.
    Notably, these approaches enable the enforcement of design specifications (e.g., closed-loop stability and $\mathcal{H}_2$/$\mathcal{H}_\infty$ performance) for all systems consistent with the data.
    Further developments include approaches for linear quadratic regulation \cite{dorfler2023certainty}, partial model knowledge \cite{berberich2022combining}, time-varying systems \cite{nortmann2023direct}, nonlinear systems \cite{monshizadeh2025versatile}, systems subject to input saturations \cite{seuret2024robust}, and output-feedback control based on the behavioral framework \cite{van2023behavioral} or non-minimal realizations constructed from shifted input--output data \cite{alsalti2025notes}.

    In the continuous-time setting, similar LMIs have been developed for data-driven state-feedback control \cite{de2019formulas, eising2024sampling, hu2025enforcing}, also accounting for the presence of noise, but at the price of requiring state derivative measurements in the dataset.
    To circumvent this limitation, research efforts have been dedicated to achieving \emph{derivative-free} state-feedback approaches \cite{rapisarda2023orthogonal, ohta2024sampling}.
    Despite these advancements, data-driven control of continuous-time systems from input--output data remains far less developed.
    In this direction, \cite{bosso2025derivative} and \cite{bosso2025data} proposed a derivative-free framework to design output-feedback controllers with a filter of the input--output data acting as an observer of a non-minimal realization of the plant.
    While this approach achieves stabilization and output regulation in the noise-free setting, to the best of the authors' knowledge, no method can yet handle noisy input--output data.

    Motivated by this gap, this paper proposes an approach for data-driven stabilization of a class of multi-input multi-output (MIMO) systems subject to process and measurement noise that generalizes the results of \cite{bosso2025data}.
    The proposed control design comprises a filter of the input--output signals and a linear feedback law depending on the filter states.
    The gains of the feedback law are computed with an LMI obtained by postprocessing the offline data through the filter dynamics.
    As compared with \cite{bosso2025data}, we leverage the quadratic matrix inequality (QMI) formalism of \cite{van2023quadratic} to obtain a robust LMI formulation whose decision variables do not scale with the data, thus also ensuring application to large datasets.
    Under an interval excitation condition and assuming the knowledge of a noise energy bound, we use a matrix version of the S-lemma to prove that the feasibility of the LMI is necessary and sufficient for stabilizing all non-minimal realizations consistent with the data, characterized by a bounded matrix ellipsoid.
    We then relate the feasibility of the LMI to the signal-to-noise ratio of the dataset.
    Finally, we show a method to compute the noise energy based on assumptions on the noise and the tuning of the filters.

    The paper is organized as follows.
    In Section \ref{sec:problem}, we introduce the considered control problem.
    In Section \ref{sec:main_result}, we illustrate our approach and its theoretical guarantees.
    In Section \ref{sec:feasibility}, we provide a sufficient condition for LMI feasibility, while Section \ref{sec:noise_bounds} shows how to compute the noise energy bound.
    Finally, Section \ref{sec:simulations} presents two numerical examples and Section \ref{sec:conclusion} concludes the paper.
    Further details and the technical proofs are left in the Appendix.
        
    \subsubsection*{Notation}
    We use $\N$, $\R$, and $\C$ to denote the sets of natural, real, and complex numbers, respectively, while $\mathrm{i}$ is the imaginary unit.
    $I_j$ denotes the identity matrix of dimension $j$, while $0_j$ and $0_{j \times k}$ denote zero matrices of dimensions $j \times j$ and $j \times k$.
    $\otimes$ is the Kronecker product between matrices.
    We denote by $\sigma(R)$ the spectrum of a square matrix $R$ and by $\lambda_{\min}(S)$ and $\lambda_{\max}(S)$ the minimum and maximum eigenvalues of a symmetric matrix $S$.
    Given a symmetric matrix $S$, $S\succ 0$ (resp., $S \succeq 0$) denotes that it is positive definite (resp., positive semidefinite).
    Similarly, $\prec 0$ and $\preceq 0$ are used for negative definite and semidefinite matrices. 
    For symmetric matrices $S_1$ and $S_2$, we write $S_1 \preceq S_2$ if $S_1 - S_2 \preceq 0$.
    Given a signal $f(\cdot):[0, T] \to \R^{l}$, its finite-horizon $\mathcal{L}_2 [0, T]$ norm is $\| f(\cdot) \|_{2, [0, T]} \coloneqq \big(\int_{0}^{T}|f(\tau)|^2\mathrm{d}\tau\big)^{1/2}$.
    Finally, the $\mathcal{H}_{\infty}$ norm of a transfer function matrix $\mathcal{G}(s)$ is denoted by $\| \mathcal{G} \|_{\infty}$.

    \section{Problem Statement}\label{sec:problem}
    We consider continuous-time linear time-invariant systems described by the following input--output model:
    \begin{equation}\label{eq:IO_model}
        \mathcal{D}\!\left(\!\frac{\mathrm{d}}{\mathrm{d}t} \!\right)(y - v) = \mathcal{N}\!\left(\!\frac{\mathrm{d}}{\mathrm{d}t} \!\right)u + \mathcal{N}_w\!\left(\!\frac{\mathrm{d}}{\mathrm{d}t} \!\right)w,
    \end{equation}
    where $u \in \R^m$ is the control input, $y \in \R^p$ is the measured output, $w \in \R^q$ is the process noise, and $v \in \R^p$ is the measurement noise, while, for some known positive integer $n$ that we denote as \emph{order} of the differential equation \eqref{eq:IO_model},
    \begin{equation}\label{eq:polynomials}
        \begin{split}
            \mathcal{D}(\xi) &\coloneqq I_p \xi^n + A_{n - 1}\xi^{n - 1} + \ldots + A_1 \xi + A_0\\
            \mathcal{N}(\xi) &\coloneqq B_{n - 1}\xi^{n - 1} + \ldots + B_1 \xi + B_0\\
            \mathcal{N}_w(\xi) &\coloneqq E_{n - 1}\xi^{n - 1} + \ldots + E_1 \xi + E_0
        \end{split}
    \end{equation}
    are polynomial matrices with unknown matrix coefficients $A_{i} \in \R^{p \times p}$, $B_{i} \in \R^{p \times m}$, $i \in \{0, \ldots, n - 1\}$ and known matrix coefficients $E_i \in \R^{p \times q}$, $i \in \{0, \ldots, n - 1\}$.
    Note that the solutions of the differential equation \eqref{eq:IO_model} are intended in the weak sense as per \cite[Def. 2.3.7]{polderman1998introduction}, so that the signals are not required to be differentiable but only locally integrable.

    We represent the input--output model \eqref{eq:IO_model} with the following state-space realization, having state $x \in \R^{np}$:
    \begin{equation}\label{eq:plant}
        \begin{split}
            \dot{x} &= \! \underbrace{\left[\!\!\!\!\begin{array}{c|c}
                \begin{array}{ccc}
                     \!0_p\! & \cdots & \!0_p\!\!\\
                     \!I_p\! & & \\
                     & \ddots & \\
                     &  & \!I_p\!\!
                \end{array} & \begin{array}{c}
                     \!-A_0\!\\
                     \!-A_1\!\\
                     \vdots\\
                     \!\!\!\! -A_{n - 1} \!\!\!\!
                \end{array}
            \end{array}\!\!\!\right]}_{A} \!\! x + \!\! \underbrace{\begin{bmatrix}
                B_0\\ B_1\\ \vdots\\ \! B_{n - 1} \!
            \end{bmatrix}}_{B}\!\! u + \!\! \underbrace{\begin{bmatrix}
                E_0 \\ E_1 \\ \vdots \\ \! E_{n - 1} \!
            \end{bmatrix}}_{E} \! w\\
            y &= \underbrace{\left[\begin{array}{ccc|c}
                0_p & \cdots & 0_p & I_p
            \end{array}\right]}_{C} x + v,
        \end{split}
    \end{equation}
    where the pair $(C, A)$ is observable.
    Further details on the realization \eqref{eq:plant} are provided in Appendix \ref{sec:modeling}.
    We also assume the following property.
    \begin{assumption}\label{hyp:ctrb}
        The pair $(A, B)$ is controllable.
    \end{assumption}
    
    Suppose that an experiment is performed on system \eqref{eq:plant} over the time interval $[0, T]$, with $T > 0$.
    Specifically, given an unknown initial condition $x(0) \in \R^{np}$, an input trajectory $u(\cdot): [0, T] \to \R^m$, and unknown noise signals $w(\cdot): [0, T] \to \R^{q}$, $v(\cdot): [0, T] \to \R^p$, let the corresponding state and output trajectories be $x(\cdot): [0, T] \to \R^{np}$ and $y(\cdot): [0, T] \to \R^p$, with $x(\cdot)$ an absolutely continuous function.
    Then, we suppose that the following continuous-time dataset is collected during the experiment:
    \begin{equation}\label{eq:dataset}
        \{(t, u(t), y(t)): t \in [0, T]\}.
    \end{equation}
    Our goal is to use \eqref{eq:dataset} to design a stabilizing controller for system \eqref{eq:IO_model} (equivalently, system \eqref{eq:plant}) without any prior knowledge of the matrices $A_i$, $B_i$, $i \in \{1, \ldots, n -1 \}$ except for the order $n$.
    Specifically, we aim to compute the matrices $A_{\textup{c}}$, $B_{\textup{c}}$, $C_{\textup{c}}$, and $D_{\textup{c}}$ of a dynamic controller of the form
    \begin{equation}\label{eq:ctrl_generic}
        \begin{split}
            \dot{x}_{\textup{c}} &= A_{\textup{c}}x_{\textup{c}} + B_{\textup{c}}y\\
            u&= C_{\textup{c}}x_{\textup{c}} + D_{\textup{c}}y,
        \end{split}
    \end{equation}
    with $x_{\textup{c}} \in \R^\mu$ the controller state, so that the state matrix of the closed-loop interconnection of \eqref{eq:plant} and \eqref{eq:ctrl_generic}
    \begin{equation}\label{eq:closed-loop_matrix}
        \begin{bmatrix}
            A + BD_{\textup{c}}C & BC_{\textup{c}}\\
            B_{\textup{c}}C & A_{\textup{c}}
        \end{bmatrix}
    \end{equation}
    is Hurwitz.

    \section{Data-Driven Stabilization from Noisy Input--Output Data}\label{sec:main_result}
    
    \begin{algorithm}[!b]
    \caption{Data-Driven Stabilization from Noisy I--O Data}\label{alg:stabilization}
        \begin{algorithmic}
        	\State \hspace{-0.47cm} \textbf{Initialization} 
        	    \State \emph{Measured dataset:}
                \begin{equation}
                    \{(t, u(t), y(t)): t \in [0, T]\}.
                \end{equation}
                \State \emph{Controller state dimension:} $\mu \coloneqq n(p + m)$.
                \State \emph{Tuning:} $\Lambda \in \R^{n \times n}$ Hurwitz with $n$ distinct eigenvalues; $\Gamma \in \R^n$ such that $(\Lambda, \Gamma)$ is controllable; $\Delta \in \R^{p \times p}$, $\Delta = \Delta^\top \succeq 0$.
                Filter matrices:
                \begin{equation}\label{eq:FGL}
                    F \coloneqq I_{p + m} \otimes \Lambda, \quad G \coloneqq \! \begin{bmatrix}
                        0_{np \times m}\\
                        I_m \otimes \Gamma
                    \end{bmatrix}\!, \quad L \coloneqq \! \begin{bmatrix}
                        I_p \otimes \Gamma\\
                        0_{nm \times p}
                    \end{bmatrix}\!.
                \end{equation}
        	\State \hspace{-0.47cm} \textbf{Filtering}
                \State \emph{Filter signals:} compute for all $t \in [0, T]$:
                \begin{equation}\label{eq:z_chi}
                    \zeta(t) \coloneqq \begin{bmatrix}
                        \chi(t)\\
                        \hat{z}(t)
                    \end{bmatrix} \coloneqq \begin{bmatrix}
                        e^{\Lambda t}\Gamma\\
                        \int_{0}^{t}e^{F(t - \tau)}(Gu(\tau) + Ly(\tau))\mathrm{d}\tau
                    \end{bmatrix}\!.
                \end{equation}
            \State \hspace{-0.47cm} \textbf{Stabilizing Gain Computation}
                \State \emph{LMI:}
                find $P \in \R^{\mu \times \mu}$ and $Q \in \R^{m \times \mu}$ satisfying the linear matrix inequality \eqref{eq:LMI} reported in the next page.
                \State \emph{Control gain:}
                \begin{equation}\label{eq:K}
                    K = QP^{-1}.
                \end{equation}
                \State \hspace{-0.47cm} \textbf{Control Deployment}
                \State \emph{Control law:}
                \begin{equation}\label{eq:controller}
                    \dot{x}_{\textup{c}} = (F + GK)x_{\textup{c}} + Ly, \qquad  u = Kx_{\textup{c}},
                \end{equation}
                with arbitrary $x_{\textup{c}}(0) \in \R^{\mu}$.
        \end{algorithmic}
    \end{algorithm}
    \begin{figure*}[!b]
    \hrulefill
        \begin{equation}\tag{$\mathcal{S}$}\label{eq:LMI}
            \int_{0}^{T}\!\left[\begin{array}{c}
            \! Ly(\tau) \!\\ \hline
            \! -\zeta(\tau) \!
        \end{array}\right]\!\!\left[\begin{array}{c}
            \! Ly(\tau) \!\\ \hline
            \! -\zeta(\tau) \!
        \end{array}\right]^\top \!\!\!\! \mathrm{d}\tau -\left[\begin{array}{c|c}
                \!\!L\Delta L^\top\!\!  + FP + PF^\top\! + GQ + Q^\top G^\top\!\!&  \begin{array}{cc}
                    \!\!\!0_{\mu \times n}\! & \!P\!\!\!\!
                \end{array} \\ \hline
                \begin{array}{c}
                    0_{n \times \mu}\\
                    P
                \end{array}& 0_{n + \mu}
            \end{array} \right]\! \succ 0, \qquad P = P^\top\! \succ 0.
        \end{equation}
    \end{figure*}

    We now illustrate in detail the proposed data-driven stabilization approach, whose complete implementation is provided in Algorithm \ref{alg:stabilization}.
    
    The central idea is to build a non-minimal realization of the plant (Section \ref{sec:realization}) that has a structure amenable to observer design.
    This allows us to split the dynamic controller into an observer of the non-minimal realization and a feedback law depending on the observer states (Section \ref{sec:observer}).
    In particular, the feedback gain is designed to quadratically stabilize all realizations consistent with the data (Section \ref{sec:E}) by employing a data-based LMI derived from the matrix S-lemma \cite[Thm. 4.10]{van2023quadratic} (Section \ref{sec:LMI}).

    \subsection{Non-Minimal Realization for Data-Driven Control}\label{sec:realization}
    Inspired by classical results in adaptive control \cite[Ch. 4]{narendra1989stable} and following the state-space framework developed in \cite{bosso2025data}, we now introduce a non-minimal realization of the transfer function matrix $\mathcal{D}^{-1}(s)\mathcal{N}(s)$, characterizing the noise-free input--output behavior of \eqref{eq:IO_model}.
    Let $(\Lambda, \Gamma)$ be a controllable pair as described in Algorithm \ref{alg:stabilization}, and let
    \begin{equation}\label{eq:D_lambda}
        \mathcal{D}_{\lambda}(\xi) \coloneqq \xi^n + \lambda_{n - 1}\xi^{n - 1} + \ldots + \lambda_1 \xi + \lambda_0
    \end{equation}
    be the characteristic polynomial of $\Lambda$, having $n$ distinct roots with negative real part.

    Then, define $F \in \R^{\mu \times \mu}$, $G \in \R^{\mu \times m}$, $L \in \R^{\mu \times p}$ as in \eqref{eq:FGL}, with $\mu \coloneqq n(p + m)$ that represents, as shown below, the dimension of the non-minimal realization.
    The following result will be used throughout the paper.
    The proof is based on \cite{bosso2025data} and can be found in Appendix \ref{sec:realization_eq}.
    \begin{lemma}\label{lem:realization_eq}
        Under Assumption \ref{hyp:ctrb} and given $F$, $G$, and $L$ as in \eqref{eq:FGL}, there exist full-rank matrices $\Pi \in \R^{np \times \mu}$, $H \in \R^{p \times \mu}$ that satisfy
        \begin{equation}\label{eq:realization_eq}
            \begin{split}
                \Pi(F + LH) &= A\Pi, \quad \Pi G = B\\
                H &= C\Pi,
            \end{split}
        \end{equation}
        and, in addition, ensure the following properties:
        \begin{enumerate}
            \item $(F + LH, G)$ is controllable.
            \item $A - \Pi L C$ has the form
            \begin{equation}\label{eq:tilde_Lambda}
                \tilde{\Lambda} \coloneqq A - \Pi LC = \! \left[\!\!\!\!\begin{array}{c|c}
                \begin{array}{ccc}
                     0_p & \cdots & 0_p\\
                     I_p & & \\
                     & \ddots & \\
                     &  & I_p
                \end{array} \!\!\! & \! \begin{array}{c}
                     -\lambda_0 I_p\\
                     -\lambda_1 I_p\\
                     \vdots\\
                     \!\! -\lambda_{n - 1}I_p \!\!
                \end{array}
            \end{array}\!\!\!\right]\!.
            \end{equation}
        \end{enumerate}
    \end{lemma}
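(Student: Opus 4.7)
The plan is to reduce the three algebraic conditions to a single Sylvester equation, construct $\Pi$ explicitly, and then deduce the rank and controllability properties from Assumption~\ref{hyp:ctrb}. The first move is to notice that $\tilde{\Lambda}$ in \eqref{eq:tilde_Lambda} coincides with $A$ on every block except the last block column, so one can write $A = \tilde{\Lambda} + MC$ with
\[
M \coloneqq \bigl[(\lambda_0 I_p - A_0)^\top \;\; \cdots \;\; (\lambda_{n-1}I_p - A_{n-1})^\top\bigr]^\top \in \R^{np\times p}.
\]
Substituting $H = C\Pi$ into $\Pi(F + LH) = A\Pi$ and enforcing $\Pi L = M$ collapses the three defining equations into the Sylvester equation $\Pi F = \tilde{\Lambda}\Pi$ together with the interpolation constraints $\Pi G = B$ and $\Pi L = M$; the latter also yields property~2 directly, since $A - \Pi L C = \tilde{\Lambda}$.

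To construct $\Pi$, I would partition $\Pi = [\Pi_y \;\; \Pi_u]$ conformably with the block structure of $F$, $G$, $L$. The equations decouple, and a column-wise Krylov formula $\Pi_u^i = [B^i \;\; \tilde{\Lambda}B^i \;\; \cdots \;\; \tilde{\Lambda}^{n-1}B^i] R_c^{-1}$ (with $R_c \coloneqq [\Gamma \;\; \Lambda\Gamma \;\; \cdots \;\; \Lambda^{n-1}\Gamma]$ invertible by controllability of $(\Lambda,\Gamma)$) solves the $i$-th Sylvester problem for $\Pi_u$; verification uses the Cayley--Hamilton identity $\mathcal{D}_\lambda(\tilde{\Lambda}) = 0$, which holds since $\tilde{\Lambda} = \Lambda_c \otimes I_p$ with $\Lambda_c$ the scalar companion of $\mathcal{D}_\lambda$. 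The analogous formula with columns of $M$ in place of $B^i$ defines $\Pi_y$. Full row rank of $\Pi$ then follows by iterating $\Pi(F+LH)^k G = A^k B$ for $k \geq 0$: the image of $\Pi$ contains the controllability subspace of $(A,B)$, which equals $\R^{np}$ by Assumption~\ref{hyp:ctrb}. Full row rank of $H = C\Pi$ is then immediate, since $C$ projects onto the last block coordinate and $\Pi$ is surjective.

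The hard part will be establishing property~1, the controllability of $(F+LH, G)$. I would exploit the block upper-triangular decomposition
\[
F + LH = \begin{bmatrix} (I_p\otimes\Lambda) + (I_p\otimes\Gamma)H_y & (I_p\otimes\Gamma) H_u \\ 0 & I_m\otimes\Lambda \end{bmatrix}\!, \; G = \begin{bmatrix} 0 \\ I_m\otimes\Gamma \end{bmatrix}\!,
\]
with $H = [H_y \;\; H_u]$ the conformable partition. Since $(I_m\otimes\Lambda, I_m\otimes\Gamma)$ is controllable, being $m$ independent copies of $(\Lambda,\Gamma)$, the PBH test reduces to ruling out any left eigenvector of $F+LH$ with nonzero upper block annihilating $G$. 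Using the Sylvester-derived identities $H_u(sI - I_m\otimes\Lambda)^{-1}(I_m\otimes\Gamma) = C(sI - \tilde{\Lambda})^{-1}B$ and the analogue for $H_y$ with $M$ replacing $B$, together with a Woodbury manipulation on $(sI - \tilde{\Lambda})^{-1} = (sI - A + MC)^{-1}$, the PBH condition for $(F+LH, G)$ collapses onto the PBH condition for $(A,B)$, which holds by Assumption~\ref{hyp:ctrb}. The subtle case is when the candidate eigenvalue lies in $\sigma(\Lambda) \cap \sigma(A)$, where the Woodbury manipulation is singular and must be replaced by a direct eigenvector analysis exploiting the structure $\tilde{\Lambda} = \Lambda_c \otimes I_p$ together with the eigenspace description of $I_m \otimes \Lambda$.
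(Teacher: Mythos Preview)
Your approach is genuinely different from the paper's. The paper does not construct $\Pi$ or verify controllability from scratch: it invokes \cite[Thm.~3]{bosso2025data} for the existence of full-rank $\Pi,H$ satisfying \eqref{eq:realization_eq} (using that the observability indices of $(C,A)$ are all equal to $n$), and \cite[Thm.~4]{bosso2025data} for the controllability of $(F+LH,G)$. The only explicit computation in the paper's proof is for item~2, where it conjugates $A-\Pi LC$ by the permutation $\Phi$ of Appendix~\ref{sec:obs_form} to land in $I_p\otimes(\bar A+\lambda_{\textup m}\mathrm e_n^\top)$ and then permutes back. Your route, by contrast, is fully self-contained: you extract the decomposition $A=\tilde\Lambda+MC$, reduce \eqref{eq:realization_eq} to the decoupled Sylvester problems $\Pi_y(I_p\otimes\Lambda)=\tilde\Lambda\Pi_y$, $\Pi_u(I_m\otimes\Lambda)=\tilde\Lambda\Pi_u$ with interpolation constraints, solve them via the Krylov formula (which is correct, since $\mathcal D_\lambda(\tilde\Lambda)=0$ and $R_c$ is invertible), and then recover full row rank of $\Pi$ from $\Pi(F+LH)^kG=A^kB$ together with Assumption~\ref{hyp:ctrb}. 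This buys you independence from the cited reference and an explicit formula for $\Pi$, at the cost of a longer argument.

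The one place where your plan is not yet airtight is the PBH verification of property~1. Your resolvent/Woodbury reduction to the PBH test for $(A,B)$ is correct whenever $s\notin\sigma(\Lambda)$, because $(sI-\tilde\Lambda)^{-1}$ exists there; but the obstruction is all of $\sigma(\Lambda)$, not only $\sigma(\Lambda)\cap\sigma(A)$ as you write, since the very first step $H_u(sI-I_m\otimes\Lambda)^{-1}(I_m\otimes\Gamma)=C(sI-\tilde\Lambda)^{-1}B$ already requires $s\notin\sigma(\tilde\Lambda)=\sigma(\Lambda)$. For $s\in\sigma(\Lambda)$ the block upper-triangular structure still forces $v_1\neq 0$ (otherwise $v_2$ would violate controllability of $(I_m\otimes\Lambda,I_m\otimes\Gamma)$), but you then need a direct argument exploiting $\Pi_y(I_p\otimes\Lambda+(I_p\otimes\Gamma)H_y)=A\Pi_y$ and the eigenstructure of $\tilde\Lambda=\Lambda_c\otimes I_p$ rather than resolvents; your sketch points in the right direction, but the details of that case should be written out explicitly to close the argument.
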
  
    Using $\Pi$ and $H$ from Lemma \ref{lem:realization_eq}, we obtain from \cite[Lemma 1]{bosso2025data} that the following dynamical system
    \begin{equation}\label{eq:canonical}
        \begin{split}
            \dot{z} &= (F + LH)z + Gu\\
            y &= Hz,
        \end{split}
    \end{equation}
    with $z \in \R^{\mu}$, is a controllable and detectable (but not observable) non-minimal realization of $\mathcal{D}^{-1}(s)\mathcal{N}(s)$.
    In particular, the controllable and observable subsystem of \eqref{eq:canonical}, parameterized via $x = \Pi z$, obeys $\dot{x} = Ax + Bu$, $y = Cx$.

    \subsection{Observer-Based Control Design}\label{sec:observer}
    The special structure of \eqref{eq:canonical}, where $F$ is Hurwitz by construction (see Algorithm \ref{alg:stabilization}) and $Hz$ can be replaced by $y$ in the differential equation, allows us to design an observer of the plant \eqref{eq:plant} in the coordinates of the non-minimal realization, without any knowledge of the parameters in $A$ and $B$.
    In particular, define the following dynamical system:
    \begin{equation}\label{eq:observer}
        \dot{\hat{z}} = F\hat{z} + Gu + Ly,
    \end{equation}
    having the same structure of \eqref{eq:canonical}, with user-defined matrices $F$, $G$, and $L$ as per Algorithm \ref{alg:stabilization}.
    
    Then, define the estimation error
    \begin{equation}\label{eq:tilde_x}
        \tilde{x} \coloneqq x - \Pi \hat{z}.
    \end{equation}
    Using \eqref{eq:plant}, \eqref{eq:realization_eq}, \eqref{eq:observer}, and \eqref{eq:tilde_x}, the observer dynamics become
    \begin{equation}
        \begin{split}
            \dot{\hat{z}} &= F\hat{z} + Gu + L(Cx + v)\\
            &= F\hat{z} + Gu + LC(\tilde{x} + \Pi \hat{z}) + Lv\\
            &=(F + LH)\hat{z} + Gu + LC\tilde{x} + Lv,
        \end{split}
    \end{equation}
    and then we obtain the following estimation error dynamics:
    \begin{equation}\label{eq:error_dyn}
        \begin{split}
            \dot{\tilde{x}} =\;& A(\tilde{x} + \Pi \hat{z}) + Bu + E w\\
            &- \Pi ((F + LH)\hat{z} + Gu + LC\tilde{x} + Lv)\\
            =\;& (A - \Pi L C)\tilde{x} + E w - \Pi L v\\
            =\;& \tilde{\Lambda}\tilde{x} + E w - \Pi L v.
        \end{split}
    \end{equation}
    The interconnection of \eqref{eq:plant} and \eqref{eq:observer} can be rewritten in the coordinates $(\tilde{x}, \hat{z})$ as follows:
    \begin{equation}\label{eq:tilde_x_z}
        \begin{bmatrix}
            \dot{\tilde{x}}\\ \dot{\hat{z}}
        \end{bmatrix}\!\! = \!\! \begin{bmatrix}
            \tilde{\Lambda} & 0_{np \times \mu}\\
            LC\! & F\! + \! LH
        \end{bmatrix}\!\!\begin{bmatrix}
            \tilde{x} \\ \hat{z}
        \end{bmatrix} + \begin{bmatrix}
            0_{np \times m} \\ G
        \end{bmatrix}\! u \; + \begin{bmatrix}
            E & -\Pi L\\
            0_{\mu \times q}\!\!\!\! & L
        \end{bmatrix}\!\!\begin{bmatrix}
            w \\ v
        \end{bmatrix}\!.
    \end{equation}
    Since $\tilde{\Lambda}$ in \eqref{eq:tilde_Lambda} is Hurwitz and the pair $(F + LH, G)$ is controllable, the desired controller \eqref{eq:ctrl_generic} is obtained by combining the observer \eqref{eq:observer} and the feedback law
    \begin{equation}\label{eq:feedback}
        u = K\hat{z},
    \end{equation}
    where the gain $K \in \R^{m \times \mu}$ must be chosen so that $F + LH + GK$ is Hurwitz.
    Since $H$ depends on the unknown plant parameters (see \eqref{eq:realization_eq}), we cannot design $K$ using solely the prior knowledge of the pair $(F + LH, G)$.
    Therefore, in the following, we compute $K$ with data-driven techniques using the input--output trajectory \eqref{eq:dataset}.

    \subsection{Non-Minimal Realizations Consistent with the Dataset}\label{sec:E}
    Given the trajectories $u(\cdot)$ and $y(\cdot)$ over the interval $[0, T]$, we can simulate offline the observer \eqref{eq:observer}, which acts as a linear filter that post-processes the data.
    Without loss of generality and for simplicity, we consider $\hat{z}(0) = 0$, so that we obtain $\hat{z}(\cdot): [0, T] \to \R^{\mu}$ as defined in \eqref{eq:z_chi}.
    It follows that the dataset and the simulated observer trajectory $\hat{z}(\cdot)$ satisfy \eqref{eq:tilde_x_z} for almost all $t \in [0, T]$.
    Also, for all $t \in [0, T]$, the error $\tilde{x}(t) = x(t) - \Pi\hat{z}(t)$ can be written explicitly using \eqref{eq:error_dyn} and $\tilde{x}(0) = x(0)$:
    \begin{equation}\label{eq:tilde_x_t}
        \tilde{x}(t) = e^{\tilde{\Lambda} t}x(0) + \int_{0}^{t}e^{\tilde{\Lambda}(t - \tau)}(E w(\tau) - \Pi L v(\tau))\mathrm{d}\tau.
    \end{equation}
    We now exploit a technical result adapted from \cite[Lemma 4]{bosso2025data} to rewrite the free response $e^{\tilde{\Lambda}t}x(0)$ in \eqref{eq:tilde_x_t} in a convenient form.
    The proof is reported in Appendix \ref{sec:H_0}.
    \begin{lemma}\label{lem:H_0}
        For any $x(0) \in \R^{np}$ and any $(\Lambda, \Gamma)$ as defined in Algorithm \ref{alg:stabilization}, there exists $H_0 \in \R^{p \times n}$ such that, for all $t \in [0, T]$, $Ce^{\tilde{\Lambda}t}x(0) = H_0\chi(t)$, where $\chi(t)$ is given in \eqref{eq:z_chi}.
    \end{lemma}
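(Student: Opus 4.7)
My starting point is the observation that $\tilde{\Lambda}$ in \eqref{eq:tilde_Lambda} has a Kronecker structure: letting $A_c \in \R^{n\times n}$ be the scalar matrix whose $(i,j)$-entry is $1$ whenever $i = j+1$ and $-\lambda_{i-1}$ in the last column (and zero elsewhere), $A_c$ is a companion matrix of $\mathcal{D}_\lambda$ and one checks directly that $\tilde{\Lambda} = A_c \otimes I_p$. Similarly, $C = e_n^\top \otimes I_p$ with $e_n$ the last canonical basis vector of $\R^n$. Partitioning $x(0) = [\xi_0^\top, \ldots, \xi_{n-1}^\top]^\top$ with $\xi_k \in \R^p$ and collecting the blocks into $\Xi := [\xi_0 \; \cdots \; \xi_{n-1}] \in \R^{p\times n}$, a routine Kronecker manipulation gives $Ce^{\tilde{\Lambda}t}x(0) = \Xi\,(e^{A_c t})^\top e_n$.

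To rewrite the right-hand side in terms of $\chi(t) = e^{\Lambda t}\Gamma$, the plan is to invoke the Cayley--Hamilton theorem together with the fact that $\Lambda$ and $A_c$ share the characteristic polynomial $\mathcal{D}_\lambda$: since this polynomial has $n$ distinct roots, it is also the minimal polynomial of both matrices. Hence there exist analytic scalar coefficients $\alpha_0(t), \ldots, \alpha_{n-1}(t)$---depending only on $\mathcal{D}_\lambda$---such that
\begin{equation*}
e^{\Lambda t} = \sum_{k=0}^{n-1} \alpha_k(t)\,\Lambda^k, \qquad e^{A_c t} = \sum_{k=0}^{n-1} \alpha_k(t)\, A_c^k.
\end{equation*}
Define $R := [\Gamma,\,\Lambda\Gamma,\,\ldots,\,\Lambda^{n-1}\Gamma] \in \R^{n\times n}$ and $S := [e_n,\,A_c^\top e_n,\,\ldots,\,(A_c^\top)^{n-1} e_n] \in \R^{n\times n}$, and gather the coefficients into $\alpha(t) := [\alpha_0(t), \ldots, \alpha_{n-1}(t)]^\top$. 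Then $\chi(t) = R\,\alpha(t)$ and $(e^{A_c t})^\top e_n = S\,\alpha(t)$, so that $Ce^{\tilde{\Lambda}t}x(0) = \Xi S\,\alpha(t)$.

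Finally, controllability of $(\Lambda, \Gamma)$ renders $R$ invertible, so $\alpha(t) = R^{-1}\chi(t)$, and the choice $H_0 := \Xi S R^{-1} \in \R^{p \times n}$ delivers the claim. The only non-obvious observation is that the \emph{same} coefficient functions $\alpha_k(t)$ govern both $e^{\Lambda t}$ and $e^{A_c t}$, which is the main step to justify carefully; everything else reduces to the Kronecker identification of $\tilde{\Lambda}$ and a single use of the controllability hypothesis to invert $R$. Note that no information on the output dimension $p$ ever enters the reduction---$H_0$ simply inherits it through $\Xi$---which is the structural reason the lemma holds in the MIMO setting.
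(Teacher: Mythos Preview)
Your proof is correct and takes a genuinely different route from the paper's. The paper first passes through a similarity transformation $\Phi_\lambda$ that brings $\tilde{\Lambda}$ to the block-diagonal form $I_p \otimes \Lambda$, and then invokes a commutant/Sylvester-type result (cited from \cite{bosso2025data}) to produce, for each block of the transformed initial condition, a matrix $V_i$ commuting with $\Lambda$ and satisfying $V_i\Gamma = \beta_i$; this yields $H_0 = C\Phi_\lambda^{-1}V$. Your argument instead exploits directly that $\tilde{\Lambda} = A_c \otimes I_p$ and $C = e_n^\top \otimes I_p$, avoids any similarity transformation, and replaces the Sylvester step with the Cayley--Hamilton/Lagrange-interpolation observation that $e^{\Lambda t}$ and $e^{A_c t}$ are the \emph{same} degree-$(n-1)$ polynomial in their respective arguments because $\Lambda$ and $A_c$ share a minimal polynomial with simple roots. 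This makes your construction fully self-contained (no external lemma) and yields an explicit formula $H_0 = \Xi S R^{-1}$ in terms of the controllability matrix $R$ of $(\Lambda,\Gamma)$ and the observability-type matrix $S$ of $(e_n^\top, A_c)$. The paper's route, on the other hand, generalizes more readily if one drops the ``distinct eigenvalues'' hypothesis on $\Lambda$, since the Sylvester argument only needs controllability of $(\Lambda,\Gamma)$, whereas your interpolation step relies on the minimal polynomial having degree $n$.
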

    Combining $y(t) = C\tilde{x}(t) + H\hat{z}(t) + v(t)$ with \eqref{eq:tilde_x_t} and Lemma \ref{lem:H_0} we obtain, for all $t \in [0, T]$,
    \begin{equation}\label{eq:true_model}
            y(t) = \Theta^\star\zeta(t) + d(t),
    \end{equation}
    where $\zeta(t)$ is defined in \eqref{eq:z_chi}, the matrix
    \begin{equation}\label{eq:Theta_star}
        \Theta^\star \coloneqq \begin{bmatrix}
            H_0 & H
        \end{bmatrix} \in \R^{p \times (n + \mu)},
    \end{equation}
    denotes the parameters derived from the unknown matrices $A$ and $B$ and the initial condition $x(0)$, while, for all $t \in [0, T]$,
    \begin{equation}\label{eq:disturbance}
        d(t) \coloneqq v(t) + C\int_{0}^{t}e^{\tilde{\Lambda}(t - \tau)}(E w(\tau) - \Pi L v(\tau))\mathrm{d}\tau.
    \end{equation}
    Note that $y(t)$ and $\zeta(t)$ are available for measurement, while $\Theta^\star$ and $d(t)$ are unknown.
    In particular, to represent the lack of knowledge of $\Theta^\star$, we now describe the set of all matrices $\Theta \in \R^{p \times (n + \mu)}$ that are consistent with the dataset \eqref{eq:dataset}.
    To obtain a set with suitable regularity properties, we impose the following assumptions on the signals appearing in \eqref{eq:true_model}.
    \begin{assumption}\label{hyp:Delta}
        The matrix $\Delta$ in Algorithm \ref{alg:stabilization} is such that
        \begin{equation}\label{eq:Delta}
            \int_{0}^{T}d(\tau)d^\top(\tau)\mathrm{d}\tau \preceq \Delta.
        \end{equation}
    \end{assumption}
    \begin{assumption}\label{hyp:E}
        The following excitation condition holds:
        \begin{equation}\label{eq:E}
            Z \coloneqq \int_{0}^{T} \zeta(\tau) \zeta^\top(\tau) \mathrm{d}\tau \succ 0.
        \end{equation}
    \end{assumption}
    \begin{remark}
        Similar to the state-feedback scenario of \cite{eising2024sampling}, Assumption \ref{hyp:Delta} allows us to represent the set of parameters $\Theta$ as the solutions of a QMI and, then, exploit the results of  \cite{van2023quadratic} for data-driven stabilization.
        In Section \ref{sec:noise_bounds}, we show how to compute $\Delta$ satisfying \eqref{eq:Delta} when certain prior knowledge about the noise signals $w(\cdot)$ and $v(\cdot)$ is available.
    \end{remark}
    \begin{remark}
        Assumption \ref{hyp:E} corresponds to the full-rank conditions appearing in the data-driven literature \cite{de2019formulas, bisoffi2022data, van2023behavioral}.
        Even though such a requirement would not be necessary to introduce the non-strict LMI of \cite[Cor. 4.13]{van2023quadratic}, it provides improved theoretical guarantees and leads to the strict LMI of \cite[Thm. 4.10]{van2023quadratic} which, as argued there, may be preferred in terms of numerical reliability. 
    \end{remark}
    For compactness of notation, define
    \begin{equation}\label{eq:data_matrix}
        \left[\begin{array}{c|c}
            Y & X^\top \\\hline
            X & Z 
        \end{array}\right] \coloneqq \int_{0}^T\left[\begin{array}{c}
            \!\!\! y(\tau) \!\!\!\\ \hline
            \! -\zeta(\tau) \!
        \end{array}\right] \!\! \left[\begin{array}{c}
            \!\!\! y(\tau) \!\!\!\\ \hline
            \! -\zeta(\tau) \!
        \end{array}\right]^\top \!\!\!\! \mathrm{d}\tau \succeq 0,
    \end{equation}
    where the lines indicate the corresponding blocks, so that $Y \in \R^{p \times p}$, $X \in \R^{(n + \mu) \times p}$, and $Z \in \R^{(n + \mu)\times (n + \mu)}$.
    
    Then, replace $d(t)$ with $y(t) - \Theta \zeta(t)$ in \eqref{eq:Delta} to obtain the set of all non-minimal realizations consistent with \eqref{eq:dataset}:
    \begin{equation}\label{eq:ellipsoid}
        \boldsymbol{\mathcal{E}} \coloneqq \left\{ \Theta \in \R^{p \times (\mu + n)} : \begin{bmatrix}
            I_p & \Theta
        \end{bmatrix} N \begin{bmatrix}
            I_p \\ \Theta^\top
        \end{bmatrix} \succeq 0 \right\},
    \end{equation}
    where
    \begin{equation}\label{eq:N}
        N \coloneqq \begin{bmatrix}
            \Delta - Y & -X^\top\\
            -X & -Z
        \end{bmatrix}.
    \end{equation}

    \subsection{Data-Driven Stabilization via the Matrix S-Lemma}\label{sec:LMI}

    To solve the stabilization problem of Section \ref{sec:problem}, we need to find a gain $K$ for the feedback law \eqref{eq:feedback} that stabilizes all systems contained in the uncertainty set $\boldsymbol{\mathcal{E}}$ of \eqref{eq:ellipsoid}.
    More precisely, we want to find a single pair of matrices $P \in \R^{\mu \times \mu}$ and $K \in \R^{m \times \mu}$ such that $P = P^\top \succ 0$ and, for all $\Theta \in \boldsymbol{\mathcal{E}}$,
    \begin{equation}\label{eq:stability}
        \left(\!\! F \! + L\Theta \!\begin{bmatrix}
            0_{n \times \mu} \\ I_{\mu}
        \end{bmatrix}\!\! + GK \!\right)\!P + P\!\left(\!\! F \! + L\Theta \!\begin{bmatrix}
            0_{n \times \mu} \\ I_{\mu}
        \end{bmatrix}\!\! + GK \!\right)^\top\!\!\!\!\! \prec \! 0,
    \end{equation}
    which implies that \eqref{eq:closed-loop_matrix} is Hurwitz by choosing the controller \eqref{eq:observer}, \eqref{eq:feedback}.
    This property can be translated into an inclusion of sets described by QMIs.
    Using \eqref{eq:stability}, we obtain that the set of all non-minimal realizations stabilized by a specific choice of $P$ and $K$ is
    \begin{equation}\label{eq:C}
        \boldsymbol{\mathcal{C}}(P,\! K) \! \coloneqq \! \left\{\!\Theta \! \in \! \R^{p \times (\mu + n)}\!\! : \!\begin{bmatrix}
            I_{\mu}\\
            \!(L\Theta)^\top\!
        \end{bmatrix}^\top\!\!\!\!\!\! M(P, K)\!\! \begin{bmatrix}
            I_{\mu}\\
            \!(L\Theta)^\top\!
        \end{bmatrix} \! \!\succ \! 0  \!\right\}\!,
    \end{equation}
    where
    \begin{equation}
        M(P, K) \coloneqq -\!\left[\!\begin{array}{c|c}
                \!\!FP \! + \! PF^\top\!\! + GKP \! + \! PK^\top\! G^\top\!\!&  \begin{array}{cc}
                    \!\!\!\!0_{\mu \times n}\! & \!P\!\!\!\!\!
                \end{array} \\ \hline
                \begin{array}{c}
                    0_{n \times \mu}\\
                    P 
                \end{array}& 0_{n + \mu}
            \end{array}\! \right]\!\!.
    \end{equation}
    Then, the property that we want to impose is the following:
    \begin{equation}\label{eq:informativity}
        \boldsymbol{\mathcal{E}} \subseteq \boldsymbol{\mathcal{C}}(P, K).
    \end{equation}
    The next statement, which relates \eqref{eq:informativity} with the feasibility of the LMI \eqref{eq:LMI} used in Algorithm \ref{alg:stabilization}, is the main result of this work.
    Its proof is based on the strict matrix S-lemma found in \cite[Thm. 4.10]{van2023quadratic}.
    \begin{theorem}\label{thm:LMI}
        Consider Algorithm \ref{alg:stabilization} and let Assumptions \ref{hyp:ctrb}, \ref{hyp:Delta}, and \ref{hyp:E} hold.
        Then, the following statements are equivalent:
        \begin{enumerate}
            \item There exist $P \in \R^{\mu \times \mu}$ and $K \in \R^{m \times \mu}$, with $P = P^\top \succ 0$, such that $\boldsymbol{\mathcal{E}} \subseteq \boldsymbol{\mathcal{C}}(P, K)$.
            \item There exist $P \in \R^{\mu \times \mu}$ and $Q \in \R^{m \times \mu}$ such that the LMI \eqref{eq:LMI} is satisfied.
        \end{enumerate}
        Moreover, if \eqref{eq:LMI} is feasible, then $K = QP^{-1}$ is such that $F + LH + GK$ is Hurwitz and, thus, the controller \eqref{eq:controller} solves the data-driven stabilization problem of Section \ref{sec:problem}.
    \end{theorem}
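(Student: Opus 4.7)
The plan is to recast both (1) and (2) in the QMI language and connect them via the strict matrix S-lemma \cite[Thm.~4.10]{van2023quadratic}. First, substituting $Q = KP$ and using the block structure of $N$, $M(P, K)$, and the data matrix \eqref{eq:data_matrix}, I would rewrite $\mathcal{S}$ as the compact inequality $M(P, K) - \tilde L N \tilde L^\top \succ 0$ with $\tilde L \coloneqq \mathrm{diag}(L, I_{n+\mu})$. Since $M(P, K)$ is linear in $P$ (with $Q = KP$), the rescaling $P \to \alpha^{-1} P$ makes this equivalent to the existence of some $\alpha > 0$ with $M(P, K) - \alpha \tilde L N \tilde L^\top \succ 0$; the case $\alpha = 0$ is excluded because the zero $(2,2)$-block of $M(P, K)$ prevents $M(P, K) \succ 0$.

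Next, I would apply \cite[Thm.~4.10]{van2023quadratic} to the pair $(\tilde L N \tilde L^\top, M(P, K))$ in the lifted variable $W \in \R^{\mu \times (n+\mu)}$, converting the $\alpha$-inequality into a QMI implication over $W$. The S-lemma's preconditions are verified as follows: the $(2,2)$-block of $\tilde L N \tilde L^\top$ equals $-Z \prec 0$ by Assumption~\ref{hyp:E}, and a Slater-type condition holds at $W_0 = -LX^\top Z^{-1}$, using Assumption~\ref{hyp:Delta} to deduce $L\bigl(\Delta - Y + X^\top Z^{-1} X\bigr) L^\top \succeq 0$ (the strictness coming from $-Z \prec 0$).

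Finally, I would translate the implication in $W$ back to the set inclusion through the identities
\[
[I_\mu, L\Theta]\, \tilde L N \tilde L^\top\, [I_\mu; (L\Theta)^\top] = L\, [I_p, \Theta] N [I_p; \Theta^\top]\, L^\top,
\]
\[
[I_\mu, L\Theta]\, M(P, K)\, [I_\mu; (L\Theta)^\top] = -\bigl((F+LH+GK)P + P(F+LH+GK)^\top\bigr),
\]
together with the injectivity of $L$ (full column rank by construction). Since $\Theta^\star \in \boldsymbol{\mathcal{E}}$ by Assumption~\ref{hyp:Delta}, LMI feasibility yields $\Theta^\star \in \boldsymbol{\mathcal{C}}(P, K)$, hence $F + LH + GK$ is Hurwitz with Lyapunov matrix $P$; combined with the block-triangular structure of \eqref{eq:tilde_x_z} and the Hurwitz $\tilde \Lambda$ from Lemma~\ref{lem:realization_eq}, the closed-loop matrix \eqref{eq:closed-loop_matrix} is Hurwitz. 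The hard part will be the equivalence between the implication over all $W$ (naturally yielded by the S-lemma) and the original inclusion $\boldsymbol{\mathcal{E}} \subseteq \boldsymbol{\mathcal{C}}(P, K)$: since $L$ is generally tall, not every $W$ is of the form $L\Theta$, so care must be taken—likely exploiting the zero $(2,2)$-block of $M(P, K)$ and the rank structure of $\tilde L N \tilde L^\top$—to prevent the lifted implication from being strictly stronger than the inclusion actually claimed.
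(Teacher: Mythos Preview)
Your plan is essentially the paper's proof: lift to $W\in\R^{\mu\times(n+\mu)}$, apply \cite[Thm.~4.10]{van2023quadratic} to $(N_L,M(P,K))$ with $N_L=\tilde L N\tilde L^\top$, check the preconditions via $-Z\prec 0$ and nonemptiness of $\boldsymbol{\mathcal{E}}_L$, and use the zero $(2,2)$-block of $M(P,K)$ to force $\alpha>0$ before rescaling. The ``hard part'' you flag is precisely the direction $1\Rightarrow 2$, and the paper resolves it not through the block structure of $M$ but by invoking \cite[Thm.~3.4]{van2023quadratic}: because $Z\succ 0$, the lifted set $\boldsymbol{\mathcal{E}}_L$ coincides with the image $L\boldsymbol{\mathcal{E}}$, so every $W$ satisfying the QMI premise is already of the form $L\Theta$ with $\Theta\in\boldsymbol{\mathcal{E}}$, and hence the lifted implication is no stronger than the original inclusion.
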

    \begin{proof}
        Define the following sets:
        \begin{equation}\label{eq:augmented_sets}
            \begin{split}
                \boldsymbol{\mathcal{E}}_L\! &\coloneqq\! \left\{\!\Psi\! \in \! \R^{\mu \times (n + \mu)}\!:\;\, \begin{bmatrix}
                    I_{\mu}\\ \Psi^\top
                \end{bmatrix}^\top \!\!\! N_L \begin{bmatrix}
                    I_{\mu}\\ \Psi^\top
                \end{bmatrix}\; \succeq 0 \right\}\\
                \boldsymbol{\mathcal{C}}_L(P, K)\! &\coloneqq\! \left\{\!\Psi \! \in \! \R^{\mu \times (n + \mu)}\!: \!\! \begin{bmatrix}
                    I_{\mu}\\ \Psi^\top\!
                \end{bmatrix}^\top\!\!\!\!\! M(P, K)\! \begin{bmatrix}
                    I_{\mu}\\ \Psi^\top\!
                \end{bmatrix} \!\! \succ \! 0 \! \right\}\!\!,
            \end{split}
        \end{equation}
        where
        \begin{equation}
            \begin{split}
                N_L \! &\coloneqq \! \begin{bmatrix}
                L\!\!\! & \\
                & I_{n + \mu}
            \end{bmatrix}\! N \!\begin{bmatrix}
                L^\top\!\!\!\!\! & \\
                & I_{n + \mu}
            \end{bmatrix} \! = \! \begin{bmatrix}
                L(\Delta - Y)L^\top \!\! & -LX^\top\\
                -XL^\top & -Z
            \end{bmatrix}\!.
            \end{split}
        \end{equation}
        By \cite[Thm. 3.4]{van2023quadratic}, $L\boldsymbol{\mathcal{E}} \coloneqq \{L\Theta: \Theta \in \boldsymbol{\mathcal{E}}\} = \boldsymbol{\mathcal{E}}_L$ because $Z \succ 0$ by Assumption \ref{hyp:E}.
        Furthermore, $L\boldsymbol{\mathcal{C}}(P, K) \coloneqq \{L\Theta: \Theta \in \boldsymbol{\mathcal{C}}(P, K)\} \subseteq \boldsymbol{\mathcal{C}}_L(P, K)$ because, for any $\Theta \in \boldsymbol{\mathcal{C}}(P, K)$, $\Psi = L\Theta \in \boldsymbol{\mathcal{C}}_L(P, K)$. 

        We prove the equivalence of the statement using \cite[Thm. 4.10]{van2023quadratic}.
        To verify its assumptions, we inspect the matrix $N_L$:
        \begin{itemize}
            \item The $(2, 2)$ block is negative definite as $Z \succ 0$.
            \item $L(\Delta - Y)L^\top - LXZ^{-1}XL^\top \succeq 0$ because $\boldsymbol{\mathcal{E}}_L$ is nonempty, see \cite[Eq. (3.4)]{van2023quadratic}.
            \item $\ker Z \subseteq \ker LX^\top$ because $Z$ has full rank.
        \end{itemize}
        We are ready to prove both directions of the equivalence.

        \textbf{1} $\implies$ \textbf{2}: Let $P = P^\top \succ 0$ and $K$ satisfy \eqref{eq:informativity}.
        It holds that
        \begin{equation}
            \boldsymbol{\mathcal{E}}_L = L\boldsymbol{\mathcal{E}} \subseteq L\boldsymbol{\mathcal{C}}(P, K) \subseteq \boldsymbol{\mathcal{C}}_L(P, K).
        \end{equation}
        From \cite[Thm. 4.10]{van2023quadratic}, $\boldsymbol{\mathcal{E}}_L \subseteq \boldsymbol{\mathcal{C}}_L(P, K)$ holds only if there exists $\alpha \geq 0$ such that
        \begin{equation}\label{eq:LMI_alpha}
            M(P, K) - \alpha N_L \succ 0.
        \end{equation}
        Note that, by construction, $M(P, K)$ does not have full rank, thus the feasibility of \eqref{eq:LMI_alpha} implies $\alpha > 0$.
        As a consequence, divide \eqref{eq:LMI_alpha} by $\alpha$ and redefine $\alpha^{-1}P$ and $\alpha^{-1}PK$ as $P$ and $Q$ to obtain the LMI \eqref{eq:LMI}.

        \textbf{2} $\implies$ \textbf{1}: Let $P = P^\top \succ 0$ and $Q$ satisfy \eqref{eq:LMI}.
        If we define $K = QP^{-1}$, we obtain that \eqref{eq:LMI_alpha} is satisfied with $\alpha = 1$.
        Then, by \cite[Thm. 4.10]{van2023quadratic}, $\boldsymbol{\mathcal{E}}_L \subseteq \boldsymbol{\mathcal{C}}_L(P, K)$.
        From the fact that $\boldsymbol{\mathcal{E}}_L = L\boldsymbol{\mathcal{E}}$, this inclusion corresponds to saying that if $\Theta \in \boldsymbol{\mathcal{E}}$, then $L\Theta \in \boldsymbol{\mathcal{C}}_{L}(P, K)$, i.e., it satisfies \eqref{eq:stability}.
        It follows that $\Theta \in \boldsymbol{\mathcal{C}}(P, K)$, which implies \eqref{eq:informativity}.
    \end{proof}
    
    \section{Feasibility of the LMI}\label{sec:feasibility}
    We now exploit Theorem \ref{thm:LMI} to provide a sufficient condition based on data such that the LMI \eqref{eq:LMI} is feasible.

    Under Assumption \ref{hyp:E}, $\boldsymbol{\mathcal{E}}$ can be rewritten as a bounded matrix ellipsoid.
    By defining $\Tilde{\Theta} \coloneqq \Theta - \hat{\Theta}$, where
    \begin{equation}
        \hat{\Theta} \coloneqq -X^\top Z^{-1}
    \end{equation}
    is the \emph{least-squares estimate} of $\Theta^\star$, we can change the coordinates in the QMI of \eqref{eq:ellipsoid} to obtain
    \begin{equation}
        \begin{bmatrix}
            I_p & \Theta
        \end{bmatrix} N \begin{bmatrix}
            I_p \\ \Theta^\top
        \end{bmatrix} = \begin{bmatrix}
            I_p & \tilde{\Theta}
        \end{bmatrix} \!\! \begin{bmatrix}
            S_N & 0_{p \times (n + \mu)}\\
            0_{(n + \mu)\times p} & -Z
        \end{bmatrix} \!\! \begin{bmatrix}
            I_p \\ \tilde{\Theta}^\top
        \end{bmatrix}\!,
    \end{equation}
    where $S_N \coloneqq \Delta - Y + X^\top Z^{-1} X$ is the Schur complement of the $(2, 2)$ block of $N$ in \eqref{eq:N}. 
    As a consequence, \eqref{eq:ellipsoid} becomes
    \begin{equation}
        \boldsymbol{\mathcal{E}} = \left\{\Theta \in \R^{p \times (n + \mu)}: (\Theta - \hat{\Theta})Z(\Theta - \hat{\Theta})^\top \preceq S_N \right\}.
    \end{equation}
    Since the matrix \eqref{eq:data_matrix} is positive semidefinite and $Z \succ 0$, it holds from the Schur complement that $Y - X^\top Z^{-1}X \succeq 0$.
    As a consequence,
    \begin{equation}
        \boldsymbol{\mathcal{E}} \subseteq \hat{\boldsymbol{\mathcal{E}}} \coloneqq \{\Theta \in \R^{p \times (n + \mu)}: (\Theta - \hat{\Theta})Z(\Theta - \hat{\Theta})^\top \preceq \Delta \}.
    \end{equation}
    Exploiting the fact that, for any $\Theta \in \hat{\boldsymbol{\mathcal{E}}}$,
    \begin{equation}
        \lambda_{\min}(Z)(\Theta - \hat{\Theta})(\Theta - \hat{\Theta})^\top \preceq (\Theta - \hat{\Theta})Z(\Theta - \hat{\Theta})^\top \preceq \Delta,
    \end{equation}
    we obtain the following bound:
    \begin{equation}\label{eq:rho}
        |\Theta - \hat{\Theta}|^2 \leq \frac{\lambda_{\max}(\Delta)}{\lambda_{\min}(Z)} \eqqcolon \rho,
    \end{equation}
    where, for $\rho > 0$, $\textup{SNR} \coloneqq \rho^{-1}$ is the worst-case \emph{signal-to-noise ratio} according to Assumptions \ref{hyp:Delta} and \ref{hyp:E}.
    Note that $\Theta^\star \in \boldsymbol{\mathcal{E}}$, thus it follows from the triangle inequality that
    \begin{equation}\label{eq:Theta_tilde_bound}
        |\Theta - \Theta^\star| = |(\Theta - \hat{\Theta}) - (\Theta^\star - \hat{\Theta})| \leq 2\sqrt{\rho}.
    \end{equation}
    We achieve the following feasibility condition, whose proof is given in Appendix \ref{sec:SNR}.
    \begin{corollary}\label{cor:SNR}
        Let Assumption \ref{hyp:ctrb} hold.
        Then, there exists $\rho^\star > 0$ such that, if the data satisfy Assumptions \ref{hyp:Delta} and \ref{hyp:E}, with $\rho \coloneqq \lambda_{\max}(\Delta)/\lambda_{\min}(Z) \in [0, \rho^\star]$, then the LMI \eqref{eq:LMI} is feasible.
    \end{corollary}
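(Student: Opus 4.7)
The plan is to leverage the equivalence in Theorem~\ref{thm:LMI}: rather than constructing an explicit solution of LMI~\eqref{eq:LMI}, I would exhibit $P \succ 0$ and $K$ satisfying $\boldsymbol{\mathcal{E}} \subseteq \boldsymbol{\mathcal{C}}(P, K)$, so that the implication \textbf{1} $\Rightarrow$ \textbf{2} of that theorem supplies feasibility. The central idea is a perturbation argument around the nominal realization $\Theta^\star$: for $\rho = 0$, bound~\eqref{eq:Theta_tilde_bound} collapses $\boldsymbol{\mathcal{E}}$ to the single point $\{\Theta^\star\}$, and the set inclusion reduces to the classical Lyapunov problem for $F + LH + GK$.

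First I would construct a nominal stabilizer. By Lemma~\ref{lem:realization_eq} the pair $(F + LH, G)$ is controllable, so there exists $K^\star \in \R^{m \times \mu}$ making $F + LH + GK^\star$ Hurwitz. Standard Lyapunov theory then delivers $P^\star = (P^\star)^\top \succ 0$ and a scalar $\gamma > 0$ such that $(F + LH + GK^\star)P^\star + P^\star(F + LH + GK^\star)^\top \preceq -\gamma I_\mu$; equivalently, $\Theta^\star \in \boldsymbol{\mathcal{C}}(P^\star, K^\star)$ with strict margin $\gamma$.

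Next I would propagate this pointwise property to the whole uncertainty set. Setting $\Sigma \coloneqq \left[\begin{smallmatrix}0_{n\times\mu}\\I_\mu\end{smallmatrix}\right]$, the matrix in~\eqref{eq:stability} evaluated at a generic $\Theta \in \boldsymbol{\mathcal{E}}$ equals its nominal counterpart plus the symmetric perturbation $L(\Theta - \Theta^\star)\Sigma P^\star + P^\star \Sigma^\top (\Theta - \Theta^\star)^\top L^\top$, whose spectral norm is at most $2|L||P^\star||\Theta - \Theta^\star|$. Combining with~\eqref{eq:Theta_tilde_bound}, $|\Theta - \Theta^\star| \leq 2\sqrt{\rho}$, yields a uniform perturbation bound of $4|L||P^\star|\sqrt{\rho}$ over $\boldsymbol{\mathcal{E}}$. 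Hence any $\rho^\star > 0$ with $4|L||P^\star|\sqrt{\rho^\star} < \gamma$ gives $\boldsymbol{\mathcal{E}} \subseteq \boldsymbol{\mathcal{C}}(P^\star, K^\star)$, and Theorem~\ref{thm:LMI} concludes.

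The main obstacle is conceptual rather than technical: since $K^\star$, $P^\star$, and $\gamma$ depend on the unknown matrix $H$, the resulting $\rho^\star$ cannot be made explicit from data alone. The corollary should therefore be read as a qualitative existence statement, guaranteeing feasibility of~\eqref{eq:LMI} whenever the signal-to-noise ratio is sufficiently large, without furnishing an a priori data-computable admissible noise bound.
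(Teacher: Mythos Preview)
Your proposal is correct and follows essentially the same route as the paper's proof: construct a nominal stabilizer via controllability of $(F+LH,G)$ from Lemma~\ref{lem:realization_eq}, obtain a Lyapunov certificate with strict margin, bound the perturbation term $L(\Theta-\Theta^\star)\Sigma$ via~\eqref{eq:Theta_tilde_bound}, and invoke Theorem~\ref{thm:LMI}. The only cosmetic difference is that the paper writes the nominal Lyapunov inequality in the dual form $\Omega A + A^\top\Omega \preceq -I_\mu$ and sets $P=\Omega^{-1}$, whereas you work directly with $P^\star$ in the form of~\eqref{eq:stability}; the resulting thresholds $\rho^\star$ have the same structure.
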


    \section{Remarks on the Noise Bound Computation}\label{sec:noise_bounds}
    Note that the bound $\Delta$ in Assumption \ref{hyp:Delta} depends both on the noise signals $w(\cdot)$, $v(\cdot)$ and the parameters of the plant \eqref{eq:plant} and the observer \eqref{eq:observer}.
    Specifically, $d(t)$ in \eqref{eq:disturbance} can be split as $d(t) = d_w(t) + d_v(t)$, where $d_w(t)$ and $d_v(t)$ are the outputs of the following dynamical systems:
    \begin{align}
        \label{eq:d_w}
        \dot{\eta}_w &= \tilde{\Lambda}\eta_w + Ew, \qquad d_w = C\eta_w\\
        \label{eq:d_v}
        \dot{\eta}_v &= \tilde{\Lambda}\eta_v - \Pi L v, \qquad d_v = C\eta_v + v,
    \end{align}
    with zero initial conditions.
    It is convenient to report the input--output description of these systems via their transfer function matrices:
    \begin{equation}
        \mathcal{G}_{w}(s) \coloneqq \frac{\mathcal{N}_w(s)}{\mathcal{D}_{\lambda}(s)}, \qquad \mathcal{G}_{v}(s) \coloneqq \frac{\mathcal{D}(s)}{\mathcal{D}_{\lambda}(s)},
    \end{equation}
    where $\mathcal{D}_{\lambda}$ is given in \eqref{eq:D_lambda}.
    $\mathcal{G}_w(s)$ and $\mathcal{G}_v(s)$ follow from similar computations to those in Appendix \ref{sec:modeling} after noticing that, due to \eqref{eq:plant} and \eqref{eq:tilde_Lambda}, $\Pi L = [\lambda_0 I_p - A_0^\top \; \ldots \; \lambda_{n - 1}I_p - A_{n - 1}^\top]^\top$.
    Note that only $\mathcal{G}_{v}(s)$ depends on unknown parameters.

    Using the property $d(t)d(t)^\top \preceq |d(t)|^2 I_p$ and the triangle inequality in $\mathcal{L}_2 [0, T]$, we obtain
    \begin{equation}\label{eq:Delta_computation}
        \begin{split}
            \int_{0}^{T}\!\!\!d(\tau)d^\top\!(\tau)\mathrm{d}\tau &\!\preceq\! I_p \| d(\cdot) \|_{2, [0, T]}^2 \\
            &\!\preceq\! I_p ( \| d_w(\cdot) \|_{2, [0, T]} \! + \! \| d_v(\cdot) \|_{2,[0, T]})^2.
        \end{split}
    \end{equation}
    Thus, $\Delta$ in \eqref{eq:Delta} can be computed using \eqref{eq:Delta_computation} if upper bounds of $\| d_w(\cdot) \|_{2, [0, T]}$ and $\| d_v(\cdot) \|_{2,[0, T]}$ are known.
    Below, we show how to compute them under certain prior knowledge of $w(\cdot)$ and $v(\cdot)$ and suitable properties of $\mathcal{G}_w(s)$ and $\mathcal{G}_v(s)$.

    \subsection{Bound on the Filtered Process Noise}
    Suppose that a scalar $\delta_w \geq 0$ is known such that
    \begin{equation}\label{eq:delta_w}
        \| w(\cdot) \|_{2, [0, T]}^2 \leq \delta_w.
    \end{equation}
    Then, a bound on $\|d_w(\cdot) \|_{\mathcal{L}_2, [0, T]}$ is obtained by finding an upper bound $\gamma > 0$ of the $\mathcal{L}_2 [0, T]$ gain of system \eqref{eq:d_w}, which ensures:
    \begin{equation}\label{eq:d_w_bound}
        \|d_w(\cdot) \|_{2, [0, T]} \leq \gamma \|w(\cdot) \|_{2, [0, T]} \leq \gamma \sqrt{\delta_w}.
    \end{equation}
    By \cite[Thm. 3.7.4]{green1995linear}, $\gamma$ is such that the following differential Riccati equation:
    \begin{equation}\label{eq:DRE}
        \dot{W} \! = \! -\tilde{\Lambda}^\top W - W\tilde{\Lambda} - \gamma^{-2} W E E^\top W - C^\top C, \quad W(T) \! = \! 0_{np},
    \end{equation}
    has a solution over $[0, T]$.
    Since all matrices in \eqref{eq:d_w} are known, \eqref{eq:DRE} can be used to test $\gamma$.
    Let $\gamma_{\infty} > 0$ be such that $\|\mathcal{G}_w\|_{\infty} < \gamma_{\infty}$.
    Then, a conservative choice for $\gamma$ is $\gamma_{\infty}$, because by \cite[Lemma 3.7.7]{green1995linear} \eqref{eq:DRE} with $\gamma = \gamma_{\infty}$ has a solution over $[0, T]$ for all finite $T$.
    Otherwise, a tight bound can be found by searching in the interval $[0, \gamma_{\infty}]$ for $\gamma > 0$ as small as possible such that \eqref{eq:DRE} has a solution over $[0, T]$.
    
    \subsection{Bound on the Filtered Measurement Noise}
    In this scenario, we cannot use the same arguments as before because $\Pi L$ in \eqref{eq:d_v} depends on the unknown parameters in $A$.
    However, in the special case of single-output systems $(p = 1)$, we obtain the following result.
    The proof is given in Appendix \ref{sec:spectra}.
    \begin{proposition}\label{prop:spectra}
        Suppose that $p = 1$ and let all eigenvalues of $\Lambda$ be real and such that
        \begin{equation}\label{eq:spectra}
            \min_{\varsigma \in \sigma(\Lambda)} |\varsigma| \geq \max_{\varphi \in \sigma(A)}|\varphi|.
        \end{equation}
        Then, it holds that $ \|\mathcal{G}_v \|_{\infty} = 1$.
    \end{proposition}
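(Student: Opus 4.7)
The plan is to reduce the statement to a pointwise comparison of the moduli of two real-coefficient monic polynomials of degree $n$ on the imaginary axis. Since $p=1$, both $\mathcal{D}(s)$ and $\mathcal{D}_\lambda(s)$ are scalar monic polynomials of degree $n$, so $\mathcal{G}_v(s)=\mathcal{D}(s)/\mathcal{D}_\lambda(s)$ is proper with $\mathcal{G}_v(s)\to 1$ as $|s|\to\infty$. Because $\Lambda$ is Hurwitz and has real spectrum, $\mathcal{D}_\lambda$ is a product of real negative roots, $\mathcal{G}_v$ is stable, and $\|\mathcal{G}_v\|_\infty=\sup_{\omega\in\R}|\mathcal{G}_v(i\omega)|$. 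The high-frequency limit already yields the lower bound $\|\mathcal{G}_v\|_\infty\geq 1$, so the whole proof reduces to the pointwise upper bound $|\mathcal{D}(i\omega)|\leq|\mathcal{D}_\lambda(i\omega)|$ for every $\omega\in\R$.

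To establish that bound I would factor $\mathcal{D}(s)=\prod_j(s-\varphi_j)$ with $\varphi_j\in\sigma(A)$ and $\mathcal{D}_\lambda(s)=\prod_k(s-\varsigma_k)$ with $\varsigma_k\in\sigma(\Lambda)$ all real and negative, and construct a pairing between the two root multisets that respects the real/complex structure of $\sigma(A)$: each real eigenvalue of $A$ is matched with a single distinct eigenvalue of $\Lambda$, and each complex-conjugate pair $\{\varphi,\bar\varphi\}$ of $A$ is matched with two distinct eigenvalues of $\Lambda$. Since $A$ is real, its complex eigenvalues come in conjugate pairs, so the pairing is well defined and exhausts all $n$ roots on each side. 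Then $|\mathcal{D}(i\omega)|^2$ and $|\mathcal{D}_\lambda(i\omega)|^2$ factor over the paired groups, and the global inequality follows by verifying it group by group.

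For a real pair, the inequality $|i\omega-\varphi|^2=\omega^2+\varphi^2\leq\omega^2+\varsigma^2=|i\omega-\varsigma|^2$ is immediate from the hypothesis $|\varphi|\leq|\varsigma|$. For a complex-conjugate pair with $\varphi=a+ib$ matched to real $\varsigma_1,\varsigma_2$, I would combine the identity $|i\omega-\varphi|^2|i\omega-\bar\varphi|^2=(\omega^2+|\varphi|^2)^2-4b^2\omega^2$ with the polynomial inequality $(\omega^2+\varsigma_1^2)(\omega^2+\varsigma_2^2)-(\omega^2+|\varphi|^2)^2=(\varsigma_1^2+\varsigma_2^2-2|\varphi|^2)\omega^2+(\varsigma_1^2\varsigma_2^2-|\varphi|^4)\geq 0$, which holds whenever $\varsigma_k^2\geq|\varphi|^2$ for both $k$; subtracting the nonnegative quantity $4b^2\omega^2$ preserves the direction and yields the group inequality.

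I expect the complex-conjugate case to be the main obstacle, because the hypothesis $\min_{\varsigma}|\varsigma|\geq\max_\varphi|\varphi|$ does \emph{not} imply a factor-wise inequality $|i\omega-\varphi|\leq|i\omega-\varsigma|$ when $\mathrm{Im}\,\varphi\neq 0$: the cross term $-2b\omega$ in $|i\omega-\varphi|^2=a^2+(\omega-b)^2$ can be made arbitrarily negative. The cancellation that eliminates this obstruction appears only after grouping the two conjugate factors, which is precisely why the structured pairing is needed. Multiplying the group-wise inequalities gives $|\mathcal{D}(i\omega)|\leq|\mathcal{D}_\lambda(i\omega)|$ for all $\omega\in\R$, hence $\|\mathcal{G}_v\|_\infty\leq 1$; together with the lower bound this yields $\|\mathcal{G}_v\|_\infty=1$.
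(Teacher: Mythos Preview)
Your proposal is correct and follows essentially the same route as the paper: factor $\mathcal{G}_v$ by grouping real roots of $\mathcal{D}$ with single poles and complex-conjugate root pairs with two poles, use the identity $|i\omega-\varphi|^2|i\omega-\bar\varphi|^2=(\omega^2+|\varphi|^2)^2-4\,\mathfrak{I}(\varphi)^2\omega^2$ to bound each group by~$1$, and invoke the unit high-frequency limit for equality. Your explicit expansion $(\varsigma_1^2+\varsigma_2^2-2|\varphi|^2)\omega^2+(\varsigma_1^2\varsigma_2^2-|\varphi|^4)\geq 0$ and your remark on why a single-factor bound fails for nonreal $\varphi$ are nice additions, but the argument is the same.
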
    
    Under the assumptions of Proposition \ref{prop:spectra}, if a scalar $\delta_v \geq 0$ is known such that $\|v(\cdot) \|_{2, [0, T]}^2 \leq \delta_v$, we obtain
    \begin{equation}\label{eq:delta_v}
        \| d_v(\cdot) \|_{2, [0, T]} \leq \sqrt{\delta_v}.
    \end{equation}
    Future work will investigate the problem of computing a bound for $\| d_v(\cdot) \|_{2, [0, T]}$ in the multi-output scenario ($p > 1$).

    \section{Numerical Results}\label{sec:simulations}

    We now apply Algorithm \ref{alg:stabilization} to some case studies.
    The code has been developed in MATLAB using YALMIP \cite{Lofberg2004} and MOSEK \cite{mosek} and is available at the linked repository\footnote{\url{https://github.com/IMPACT4Mech/continuous-time_stabilization_from_noisy_data}}.

    \begin{figure}[!t]
        \centering
        \includegraphics[width=0.9\linewidth]{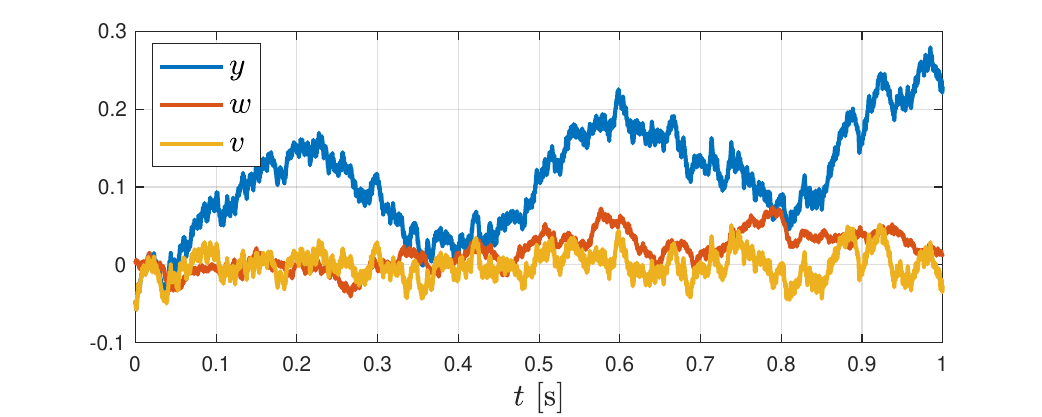}
        \includegraphics[width=0.9\linewidth]{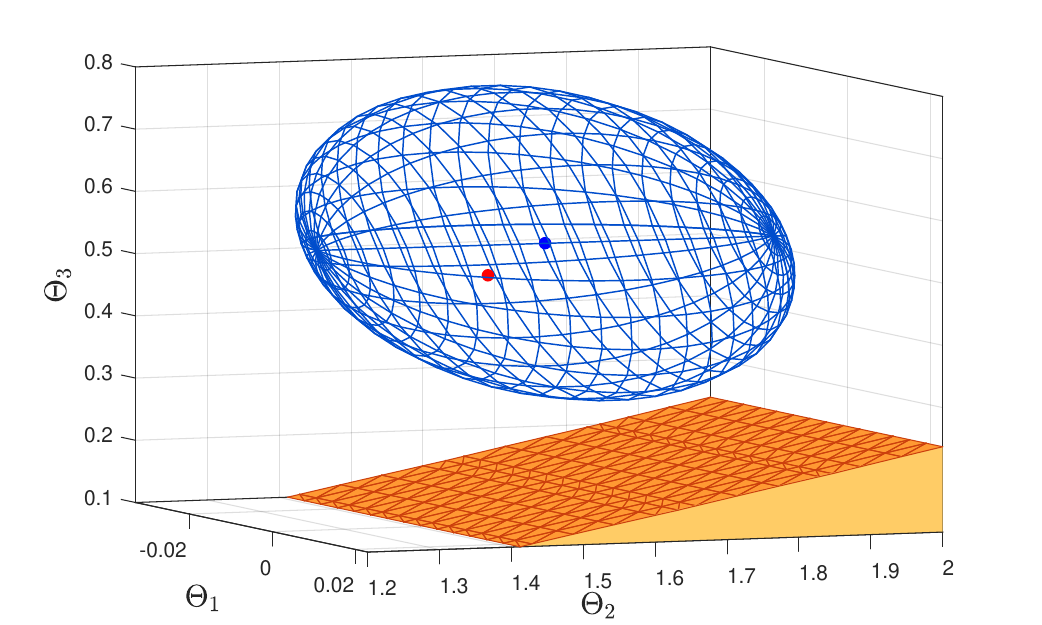}

        \vspace{-10pt}
        
        \caption{Scalar example.
        \emph{Top:} measured output $y(\cdot)$ and noise signals $w(\cdot)$ and $v(\cdot)$.
        \emph{Bottom:} representation of $\boldsymbol{\mathcal{E}}$ (blue), $\R^{3}\backslash\boldsymbol{\mathcal{C}}(P, K)$ (orange), $\Theta^\star = [0\; 1.5\; 0.5]$ (red dot), and $\hat{\Theta} = [-0.0054\; 1.6106\; 0.5395]$ (blue dot).
        }
        \label{fig:scalar_example}

        \vspace{-10pt}
        
    \end{figure}

    \subsection{Scalar System with Process and Measurement Noise}
    In the first example, we consider the following system:
    \begin{equation}\label{eq:scalar}
        \dot{x} = x + u + w, \quad y = x + v,
    \end{equation}
    which allows us to graphically illustrate the set inclusion \eqref{eq:informativity} used in Theorem \ref{thm:LMI}.
    We obtain a dataset over the interval $[0, 1]$ s by simulating \eqref{eq:scalar} from $x(0) = 0$ and applying the input $u(\cdot): t \mapsto \sin(5 \pi t)$.
    Also, we let $\| w(\cdot) \|^2_{2,[0, 1]} = \delta_w = 0.8 \times 10^{-3}$, $\| v(\cdot) \|^2_{2,[0, 1]} = \delta_v = 0.3\times 10^{-3}$.
    The noise signals and the measured output are reported in Fig. \ref{fig:scalar_example}.

    Then, we run Algorithm \ref{alg:stabilization} with $\Lambda = -2$, $\Gamma = 2$ and $\Delta = 7.1045 \times 10^{-4}$ computed according to \eqref{eq:Delta_computation}, \eqref{eq:d_w_bound}, \eqref{eq:delta_v}, with $\gamma = 0.33$ checked via \eqref{eq:DRE}.
    The resulting ellipsoid $\boldsymbol{\mathcal{E}}$ as in \eqref{eq:ellipsoid} is depicted in Fig. \ref{fig:scalar_example}, while the computed control gain $K$ and the matrix $P$ obtained from the LMI \eqref{eq:LMI} are
    \begin{equation}
        K = \begin{bmatrix*}[r]
            -29.7075\! &\!   -4.8734
        \end{bmatrix*}\!, \;\; P = \begin{bmatrix*}[r]
            1.26 & -5.34\\
            -5.34 & 53.25
        \end{bmatrix*}\! \times 10^{-3}\!,
    \end{equation}
    and the eigenvalues of \eqref{eq:closed-loop_matrix} are $\{-2, -5.37 \pm 4.34\mathrm{i}\}$.
    In Fig. \ref{fig:scalar_example}, we also depict the values of $\Theta^\star$ and $\hat{\Theta}$ and the set $\R^{3}\backslash\boldsymbol{\mathcal{C}}(P, K)$, confirming that the property \eqref{eq:informativity} holds.

    \subsection{Unstable Batch Reactor with Process Noise}
    We now consider the model of an unstable batch reactor derived from \cite[\S 2.6]{green1995linear}, which we write in the form \eqref{eq:plant} by letting  $m = p = n = 2$ and:
    \begin{equation}
        \begin{split}
            A_0 \! = \! \begin{bmatrix*}[r]
                -20.97 & -48.63\\
                2.643 &  5.867
            \end{bmatrix*}\!\!,& \; A_1 \! = \! \begin{bmatrix*}[r]
                5.297  & -10.47\\
                -0.2764 &  6.371
            \end{bmatrix*}\\
            B_0 \! = \! \begin{bmatrix*}[r]
                -59.44 & -12.63\\
                12.59 &  0.8696
            \end{bmatrix*}\!\!,& \; B_1 = \; \begin{bmatrix*}[r]
                0   &  -3.146\\
                5.679  &    0
            \end{bmatrix*}.
        \end{split}
    \end{equation}
    We assume that the process noise $w \in \R^2$ affects the system via $E_0 = I_2$, $E_1 = 0_2$, while we let $v = 0$.

    In this example, we run Algorithm \ref{alg:stabilization} several times to highlight the effect of the noise energy on the feasibility of the LMI \eqref{eq:LMI}.
    Each run considers a dataset over $[0, 3]$ s where $x(0) = 0$, $u(\cdot)$ is a sum of sinusoids (the same in each run), and $w(\cdot)$ is randomly generated.
    In particular, given five values of $\delta_w$, we generate $200$ signals uniformly extracted from the ball of radius $\sqrt{\delta_w}$ in $\mathcal{L}_2[0, 3]$, approximated via a Fourier orthonormal basis comprising the constant signal and sine/cosine functions with frequencies $2 j \pi/3$, $j \in \{1, \ldots, 100\}$.
    In all simulations, we use the filter matrices
    \begin{equation}
        \Lambda = \begin{bmatrix*}[r]
            0 & -12\\
            1 & -7
        \end{bmatrix*}, \qquad \Gamma = \begin{bmatrix*}[r]
            0 \\ 1
        \end{bmatrix*}.
    \end{equation}
    and let $\Delta = \gamma^2 \delta_w I_2$, with $\gamma = 0.07685$ checked via \eqref{eq:DRE}.
    The results are depicted in Fig. \ref{fig:rho} and show that the feasibility of \eqref{eq:LMI} decreases as $\rho$ in \eqref{eq:rho} increases.

    \begin{figure}
        \centering
        \includegraphics[width=0.99\linewidth]{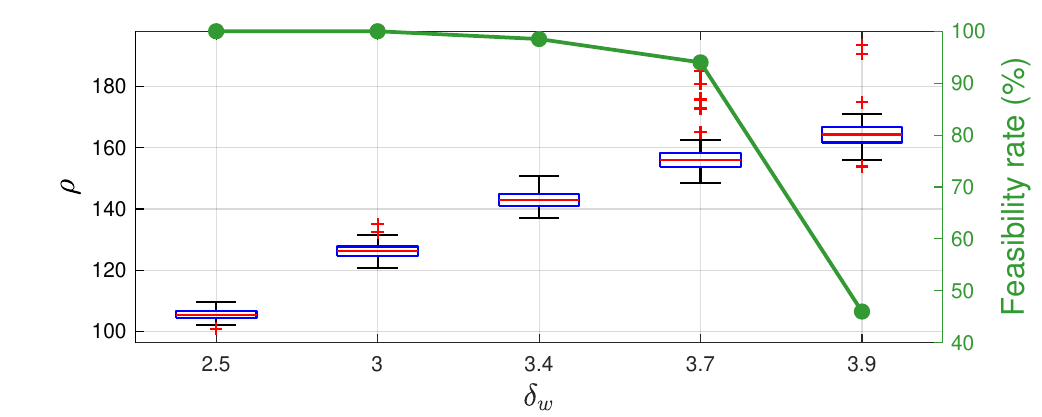}

        \vspace{-10pt}
        
        \caption{Batch reactor example. Each value of $\delta_w$ corresponds to $200$ simulation runs.
        \emph{Left axis:} box plot of $\rho$ as defined in \eqref{eq:rho}.
        \emph{Right axis:} percentage of times the LMI \eqref{eq:LMI} is feasible.}
        \label{fig:rho}

        \vspace{-10pt}
        
    \end{figure}
    
    \section{Conclusion}\label{sec:conclusion}

        We have proposed a data-driven approach to stabilize continuous-time LTI systems using solely the noisy input--output data collected from an experiment.
        The resulting algorithm involves a data-based LMI derived from the matrix S-lemma whose feasibility, under an interval excitation condition and a known noise energy bound, is both necessary and sufficient for quadratic stabilization.
        We have also related the LMI feasibility to the dataset's signal-to-noise ratio and provided tuning guidelines based on the process and measurement noise finite-horizon $\mathcal{L}_2[0, T]$ norms.
        Future work will extend the approach to broader classes of systems and consider datasets given by finite samples instead of a full continuous-time trajectory.

    \appendices\label{sec:appendix}
    \section{Remarks on the State-Space Realization}\label{sec:modeling}
    \subsection{Derivation of the Input--Output Model}
    To obtain the differential equation \eqref{eq:IO_model} from system \eqref{eq:plant}, we perform an elimination of $x$ as in \cite[\S 6.2.2]{polderman1998introduction}.
    In particular, \eqref{eq:plant} can be written as
    \begin{equation}\label{eq:state-space_ode}
        \mathcal{R}_x\left(\frac{\mathrm{d}}{\mathrm{d}t} \right)x = \mathcal{R}_y \begin{bmatrix}
            u^\top & v^\top & w^\top & y^\top
        \end{bmatrix}^\top,
    \end{equation}
    where
    \begin{equation}
        \mathcal{R}_x(\xi) \! \coloneqq \!\! \begin{bmatrix}
            \xi I_{np} - A\\
            C
        \end{bmatrix}\!\!, \; \mathcal{R}_y \! \coloneqq \! \begin{bmatrix}
            B\! & \!0_{np \times p}\! & \!E\! & \!0_{np \times p}\\
            0_{p \times m}\! & \!-I_p\! & \!0_{p \times q}\! & \!I_p
        \end{bmatrix}\!\!.
    \end{equation}
    Define the polynomial matrix
    \begin{equation}
        \mathcal{W}(\xi) \coloneqq \begin{bmatrix}
            I_p & \xi I_p & \cdots & \xi^{n-1}I_p & -\mathcal{D}(\xi)
        \end{bmatrix}.
    \end{equation}
    Then, by post-multiplying both sides of \eqref{eq:state-space_ode} by $\mathcal{W}(\mathrm{d}/\mathrm{d}t)$ and exploiting the structure of $A$, $B$, $C$, $E$ in \eqref{eq:plant}, we obtain exactly \eqref{eq:IO_model}.
    Note, in particular, that straightforward computations yield $\mathcal{W}(\xi)\mathcal{R}_x(\xi) = 0$ and $\mathcal{W}(\xi)\mathcal{R}_y = [\mathcal{N}(\xi)\; \mathcal{D}(\xi)\; \mathcal{N}_w(\xi)\; -\mathcal{D}(\xi)]$.

    \subsection{Observer Canonical Form}\label{sec:obs_form}
    Let $\mathrm{e}_i \in \R^n$, $i \in \{1, \ldots, n\}$, be the $i$-th component of the standard basis of $\R^n$ and define:
    \begin{equation}
        \bar{A} \coloneqq \begin{bmatrix}
            0_{1 \times (n - 1)} & 0\\
            I_{n - 1} & 0_{(n - 1) \times 1}
        \end{bmatrix}.
    \end{equation}
    We observe that $A$ in \eqref{eq:plant} can be written as
    \begin{equation}
        A = \bar{A} \otimes I_p + A_{\textup{m}}C = \bar{A} \otimes I_p + A_{\textup{m}}(\mathrm{e}_n^\top \otimes I_p),
    \end{equation}
    where $A_{\textup{m}} \coloneqq -[A_0^\top \; \ldots \; A_{n-1}^\top]^\top$.
    Consider the following permutation matrix:
    \begin{equation}
        \Phi \coloneqq \begin{bmatrix}
            I_p \otimes \mathrm{e}_1 & I_p \otimes \mathrm{e}_2 & \cdots & I_p \otimes \mathrm{e}_n
        \end{bmatrix} \in \R^{np \times np}.
    \end{equation}
    Some computations show that $\Phi^\top(I_p \otimes \bar{A})\Phi = \bar{A} \otimes I_p$ and $(I_p \otimes \mathrm{e}_n^\top)\Phi = \mathrm{e}_n^\top \otimes I_p = C$, which lead to the observer canonical form:
    \begin{equation}
        \begin{split}
            A_{\textup{o}} &\coloneqq \Phi A \Phi^\top = I_p \otimes \bar{A} + \tilde{A}_{\textup{m}}(I_p \otimes \mathrm{e}_n^\top)\\
            C_{\textup{o}} &\coloneqq C \Phi^\top = I_p \otimes \mathrm{e}_n^\top,
        \end{split}
    \end{equation}
    with $\tilde{A}_{\textup{m}} \coloneqq \Phi A_{\textup{m}}$.

    \section{Technical Proofs}\label{sec:proofs}
    \subsection{Proof of Lemma \ref{lem:realization_eq}}\label{sec:realization_eq}
    Since the observability indices of the pair $(C, A)$ are uniform (see \cite[As. 3]{bosso2025data}), the existence of matrices $\Pi$ and $H$ is a direct consequence of \cite[Thm. 3]{bosso2025data}.
    Then, item 1 follows from \cite[Thm. 4]{bosso2025data}.
    To prove item 2, we note from the proof of \cite[Thm. 3]{bosso2025data} that
    \begin{equation}
        \Phi(A - \Pi LC)\Phi^\top = A_{\textup{o}} - \Phi\Pi LC_{\textup{o}} = I_p \otimes (\bar{A} + \lambda_{\textup{m}}\mathrm{e}_n^\top),
    \end{equation}
    where we used the notation of Appendix \ref{sec:obs_form} and let $\lambda_{\textup{m}} \coloneqq -[\lambda_0\; \ldots \; \lambda_{n-1}]^\top$.
    We conclude the proof by noticing that
    \begin{equation}
        \begin{split}
            A - \Pi L C &= \Phi^\top (I_p \otimes (\bar{A} + \lambda_{\textup{m}}\mathrm{e}_n^\top))\Phi\\
            &= \bar{A} \otimes I_p + \Phi^\top(I_p \otimes \lambda_{\textup{m}}\mathrm{e}_n^\top)\Phi\\
            &= \bar{A} \otimes I_p + \lambda_{\textup{m}}\mathrm{e}_n^\top \otimes I_p = \tilde{\Lambda}.
        \end{split}
    \end{equation}
    \hfill\QED
    
    \subsection{Proof of Lemma \ref{lem:H_0}}\label{sec:H_0}
    
    By Lemma \ref{lem:realization_eq}, there exists a nonsingular matrix $\Phi_{\lambda}$ such that $\Phi_{\lambda} \tilde{\Lambda} \Phi_{\lambda}^{-1} = I_p \otimes \Lambda$.
    Let $\Phi_{\lambda}\tilde{x}(0) = \col(\beta_1, \ldots, \beta_p)$, with $\beta_i \in \R^n$, and let $V_i \in \R^{n \times n}$ be the solution of
    \begin{equation}
        V_i \Lambda = \Lambda V_i, \qquad V_i\Gamma = \beta_i,
    \end{equation}
    which exists and is unique by \cite[Appendix II]{bosso2025data} because $(\Lambda, \Gamma)$ is controllable.
    Define
    \begin{equation}
        V \coloneqq \begin{bmatrix}
            V_1^\top & \cdots & V_{p}^\top
        \end{bmatrix}^\top \in \R^{np \times n},
    \end{equation}
    and let $\epsilon(t) \coloneqq V\chi(t)$, where $\chi(t)$ is given in \eqref{eq:z_chi} and satisfies $\dot{\chi}(t) = \Lambda\chi(t)$, with $\chi(0) = \Gamma$. 
    We obtain
    \begin{equation}
        \dot{\epsilon}(t) \!=\! V\Lambda\chi(t) \!=\!  \Phi_{\lambda} \tilde{\Lambda} \Phi_{\lambda}^{-1}\epsilon(t), \quad \epsilon(0) \!=\! V\Gamma \!=\! \Phi_{\lambda}\tilde{x}(0),
    \end{equation}
    which yields
    \begin{equation}
        C\Phi_{\lambda}^{-1}V\chi(t) = C\Phi_{\lambda}^{-1}e^{\Phi_{\lambda}\tilde{\Lambda}\Phi_{\lambda}^{-1}}\Phi_{\lambda}\tilde{x}(0) = Ce^{\tilde{\Lambda}t}\tilde{x}(0),
    \end{equation}
    and, thus, $H_0 \coloneqq C\Phi_{\lambda}^{-1}V$.
    \hfill\QED

    \subsection{Proof of Corollary \ref{cor:SNR}}\label{sec:SNR}
    
    By Assumption \ref{hyp:ctrb}, Lemma \ref{lem:realization_eq} ensures that the pair $(F + LH, G)$ is controllable.
    As a consequence, there exist $\Omega = \Omega^\top \succ 0$ and $K$ such that
    \begin{equation}\label{eq:Lyapunov_Theta}
        \Omega\!\left(\!\! F \! + \! L\Theta^\star\!\! \begin{bmatrix}
            \! 0_{n \times \mu} \! \\ I_{\mu}
        \end{bmatrix}\!\! + \! GK \!\!\right)\! + \! \left(\!\! F \! + \! L\Theta^\star\!\! \begin{bmatrix}
            \! 0_{n \times \mu} \! \\ I_{\mu}
        \end{bmatrix}\!\! + \! GK \!\!\right)^\top\!\!\!\!\!\Omega\! \preceq \! -I_{\mu}.
    \end{equation}
    From \eqref{eq:Lyapunov_Theta}, $\Omega$ and $K$ ensure stabilization for all $\Theta$ such that
    \begin{equation}\label{eq:Theta_tilde_stable}
        \Omega L(\Theta - \Theta^\star)\begin{bmatrix}
            0_{n \times \mu} \\ I_{\mu}
        \end{bmatrix} + \begin{bmatrix}
            0_{\mu \times n} & I_{\mu}
        \end{bmatrix}(\Theta - \Theta^\star)^\top (\Omega L)^\top \preceq \ell I_\mu,
    \end{equation}
    for some $\ell \in (0, 1)$.
    Applying the norms, \eqref{eq:Theta_tilde_stable} holds if
    \begin{equation}
        |\Theta - \Theta^\star| \leq \frac{\ell}{2|\Omega L|}.
    \end{equation}
    Define
    \begin{equation}
        \rho^\star \coloneqq \frac{\ell^2}{16 |\Omega L|^2}.
    \end{equation}
    Then, if the data satisfy Assumptions \ref{hyp:Delta} and \ref{hyp:E} with $\rho \in [0, \rho^\star]$, the bound \eqref{eq:Theta_tilde_bound} implies that \eqref{eq:Theta_tilde_stable} holds, thus from the inclusions $\boldsymbol{\mathcal{E}} \subseteq \hat{\boldsymbol{\mathcal{E}}} \subseteq \boldsymbol{\mathcal{C}}(\Omega^{-1}, K)$ we obtain the statement by invoking Theorem \ref{thm:LMI}.
    \hfill\QED

    \subsection{Proof of Proposition \ref{prop:spectra}}\label{sec:spectra}

    In the following, we use $\mathfrak{R}(a)$ and $\mathfrak{I}(a)$ to denote the real and imaginary parts of a complex number $a \in \C$.
    Consider the following factorization of $\mathcal{G}_v(s)$:
    \begin{equation}
        \mathcal{G}_v(s) = \frac{\prod_{i = 1}^{n_r}(s - \vartheta_i)\prod_{j = 1}^{n_c}(s^2 - 2\mathfrak{R}(\varphi_j) s + |\varphi_j|^2)}{\prod_{k = 1}^n(s - \varsigma_k)},
        \end{equation}
    where $n_r + 2n_c = n$ and $\vartheta_i \in \R$, $\mathfrak{R}(\varphi_j) \pm \mathrm{i} \mathfrak{I}(\varphi_j) \in \C$, and $\varsigma_k \in \R$, $i \in \{1, \ldots, n_r\}$, $j \in \{1, \ldots, n_c\}$, $k \in \{1, \ldots, n\}$ denote, respectively, the real zeros, the complex conjugate zero pairs, and the poles of $\mathcal{G}_v(s)$.
    Therefore, $\mathcal{G}_v(s)$ is the product of rational functions of the form
    \begin{equation}
        \mathcal{G}_{\textup{r}}(s) \coloneqq \frac{s - \vartheta}{s - \varsigma}, \quad \mathcal{G}_{\textup{c}}(s) \coloneqq \frac{s^2 - 2\mathfrak{R}(\varphi)s + |\varphi|^2}{(s - \varsigma_1)(s - \varsigma_2)},
    \end{equation}
    where, due to condition \eqref{eq:spectra}, it holds that $|\varsigma| \geq |\vartheta|$ and $\min\{|\varsigma_1|, |\varsigma_2|\} \geq |\varphi|$.
    Thus, for all $\omega \in \R$, we obtain
    \begin{equation}
        |\mathcal{G}_{\textup{r}}(\mathrm{i}\omega)|^2 = \frac{\vartheta^2 + \omega^2}{\varsigma^2 + \omega^2} \leq 1,
    \end{equation}
    and
    \begin{equation}
        \begin{split}
            |\mathcal{G}_{\textup{c}}(\mathrm{i}\omega)|^2 &= \frac{(|\varphi|^2 - \omega^2)^2 + (2\mathfrak{R}(\varphi)\omega)^2}{(\varsigma_1^2 + \omega^2)(\varsigma_2^2 + \omega^2)}\\
            &= \frac{(|\varphi|^2 + \omega^2)^2 - (2\mathfrak{I}(\varphi)\omega)^2}{(\varsigma_1^2 + \omega^2)(\varsigma_2^2 + \omega^2)}\\
            &\leq \frac{(|\varphi|^2 + \omega^2)^2}{(\varsigma_1^2 + \omega^2)(\varsigma_2^2 + \omega^2)} \leq 1.
        \end{split}
    \end{equation}
    We conclude the proof by noticing that both the numerator and the denominator of $\mathcal{G}_v(s)$ are monic and of the same degree, thus
    \begin{equation}
        \lim_{\omega \to \infty}|\mathcal{G}_v(\mathrm{i}\omega)| = 1,
    \end{equation}
    and, as a consequence, $\| \mathcal{G}_v \|_{\infty} = \sup_{\omega \in \R}|\mathcal{G}_v(\mathrm{i}\omega)| = 1$.
    \hfill\QED

    \bibliographystyle{IEEEtran}
	
	\bibliography{data-driven_bib}

\end{document}